\newtheorem{mydefinition}{Definition}
\newtheorem{myexample}{Example}
\newcommand{\keywords}[1]{\par\addvspace\baselineskip
\noindent\keywordname\enspace\ignorespaces#1}
\begin{document}

\mainmatter
\title{Extending PPTL for Verifying Heap Evolution Properties}  


\titlerunning{Extending PPTL for Verifying Heap Evolution Properties}

%
%
\author{Xu Lu\and Zhenhua Duan\and Cong Tian}
\authorrunning{Xu Lu\and Zhenhua Duan\and Cong Tian}

\institute{ICTT and ISN Lab, Xidian University, Xi'an, 710071, P.R. China}

%
%

\toctitle{Lecture Notes in Computer Science}
\tocauthor{Authors' Instructions}
\maketitle

\begin{abstract}
In this paper, we integrate separation logic with Propositional Projection Temporal Logic (PPTL) to obtain a two-dimensional logic, namely PPTL$^{\tiny\mbox{SL}}$. The spatial dimension is realized by a decidable fragment of separation logic which can be used to describe linked lists, and the temporal dimension is expressed by PPTL. We show  that PPTL and PPTL$^{\tiny\mbox{SL}}$ are closely related in their syntax structures. That is, for any PPTL$^{\tiny\mbox{SL}}$ formula in a restricted form, there exists an ``isomorphic" PPTL formula. The ``isomorphic" PPTL formulas can be obtained by first an equisatisfiable translation and then an isomorphic mapping. As a result, existing theory of PPTL, such as decision procedure for satisfiability and model checking algorithm,  can be reused for PPTL$^{\tiny\mbox{SL}}$.
\keywords{temporal logic, separation logic, heap, two-dimensional logic, verification.}
\end{abstract}

\section{Introduction}

The heap is an area of memory for dynamic memory allocation and pointers are references to heap cells. It is hard to detect errors of heap-manipulating programs with inappropriate management of heap. Verification of such programs is an active research field today and has had a long history ever since the early 1970s \cite{Some Techniques for Proving Correctness of Programs which Alter Data Structures}. However, it is still a big challenge because of aliasing \cite{A Trace Model for Pointers and Objects}. Programs become more error-prone with serious problems, e.g. the existence of memory violation, the emergence of memory leaks, etc.. In addition, reasoning about temporal properties about the heap of these programs is even more difficult than just memory safety properties.

Reynolds \cite{Separation Logic: A Logic for Shared Mutable Data Structures} and O'Hearn \cite{BI as an Assertion Language for Mutable Data Structures} proposed a Hoare-style logic which is known as separation logic. More recently, separation logic is increasingly being used and extended for automated assertion checking \cite{Symbolic Execution with Separation Logic} and shape analysis \cite{A Local Shape Analysis based on Separation Logic}. Although separation logic is very popular due to its reasoning power in heap-manipulating programs, we emphasize that it is a variant of Hoare-like proof systems. That is to say, it is a static logic which infers assertions at each program point. What we mean by ``static'' is that separation logic is short in the power for expressing heap evolution properties which can be seen ``dynamic'' heap evolutions by time. For instance, separation logic formula $\phi \text{\scriptsize{\#}} \phi'$ specifies properties, at one state, holding respectively for disjoint portions of the heap, one makes $\phi$ true and the other makes $\phi'$ true. But temporal property like the heap can be divided into two disjoint sub-heaps ($\phi \text{\scriptsize{\#}} \phi'$) always (or eventually) holds during program execution cannot be expressed by separation logic.

Temporal logic is another highly successful formalism which has already been well-developed in automatic program verification. There are various versions of temporal logic such as Computation Tree Logic (CTL) \cite{The Temporal Logic of Branching Time} and Linear Temporal Logic (LTL) \cite{Temporal Logic of Reactive and Concurrent Systems}. While LTL is interpreted over an infinite sequence of states and CTL over a tree structure, the base logic utilized in this paper, Propositional Projection Temporal Logic (PPTL) \cite{An Extended Interval Temporal Logic and A Framing Technique for Temporal Logic Programming}, is an interval based temporal logic, which is interpreted over finite or infinite intervals. It is more powerful than both LTL and CTL with respect to expressiveness \cite{Propositional Projection Temporal Logic Buchi Automata and omega-Regular Expressions}. However, temporal logics do not have the ability to reason about heaps. Both the models and logics need to be augmented with heap ingredients if we want to deal with heaps.


It is useful to integrate the two types (spatial and temporal) of logics such that heap evolution properties can be specified and verified in a unified manner. There are various temporal logics previously designed for heap verification in the literature. Evolution Temporal Logic (ETL) \cite{Verifying Temporal Heap Properties Specified via Evolution Logic} is a first-order LTL for the description of program behaviors that causes dynamic allocation and deallocation of heap. ETL mainly focuses on describing large granularity heap objects and high-level threads. Based on a tableau model checking algorithm, Navigation Temporal Logic (NTL) \cite{Safety and Liveness in Concurrent Pointer Programs} extends LTL with pointer assertions for reasoning about the evolution of heap cells. Rieger established an expressive Temporal Pointer Logic (TPL) \cite{Verification of Pointer Programs} which expresses properties of computation paths and pointer comparisons evaluated on single heap states separately. An approach based on abstraction technique for TPL has to be built as the logic is in general undecidable. In \cite{Model Checking Dynamic Memory Allocation in Operating Systems}, LTL and CTL are combined, in time and space, to specify complex properties of programs with dynamic heap structures. Though it is a two-dimensional logic, both dimensions are realized by temporal logics that makes the difference between two dimensions less obvious. The work \cite{Reasoning About Sequences of Memory States} proposed by Brochenina et al. devises a logic by means of a quantifier-free fragment of separation logic as the underlying assertion language on top of which is Propositional LTL (PLTL). Formulas in this logic include pointer arithmetic that is well studied and expressions are enriched with $\bigcirc$ operator denoting the next value of them. Various classes of models and fragments of separation logic are explored in depth. Yet some common properties like memory safety and shape properties are unable to be characterized as quantifiers are not contained in any fragment, and no tool or experimental results for the logic are available yet.


In this paper, we propose a two-dimensional (spatial and temporal) logic named PPTL$^{\tiny\mbox{SL}}$ for specifying heap evolution properties of programs by integrating separation logic with PPTL. On one hand, our logic inherits the advantages of separation logic in describing heaps in a much simpler and more intuitionistic way. Meanwhile, the fragment of separation logic utilized here can describe complex heap structures. On the other hand, PPTL is more powerful than PLTL since PPTL describes full regular language \cite{Propositional Projection Temporal Logic Buchi Automata and omega-Regular Expressions}. In contrast, PLTL describes star free regular language \cite{On the Temporal Analysis of Fairness,Counter-Free Automata}. Moreover, the spatial-temporal logic introduced in this paper contains temporal connectives $``;"$ for sequentially combining formulas which enable us easily to express the occurrence of sequential events, and ``$^{+}"$ or ``$^{*}"$ enables us to state loop properties. For instance, the formula $P_{1};P_{2}$ asserts that $P_{1}$ holds from now until some point in the future, and from that point on, $P_{2}$ holds. $P^{*}$ or $P^{+}$ means that $P$ repeatedly holds for a finite or infinite number of times.

The main contribution of this paper includes: $(a)$ We propose an expressive temporal logic composed of a temporal dimension (evolution of programs) and a spatial dimension (heap structures); $(b)$ An isomorphic relationship is established between PPTL$^{\tiny\mbox{SL}}$ and PPTL in order to solve the satisfiability problem of PPTL$^{\tiny\mbox{SL}}$ and further to obtain the corresponding decision procedure. Our previous work \cite{Integrating Separation Logic with PPTL} also presents a logic that integrates separation logic with PPTL. However, a different fragment of separation logic without quantifiers is employed in that work. Therefore, the logic is limited in expressing some useful properties, e.g., memory safety properties. We only prove a normal form of the logic whose satisfiability problem remains unsolved in that paper. We believe that the direct normal form approach is not enough to guarantee the decidability of the logic. Therefore, we use an alternative approach which builds an isomorphic relationship in order to solve the satisfiability problem.

The remainder of this paper is organized as follows. In the following section, the syntax and semantics of the two-dimensional logic PPTL$^{\tiny\mbox{SL}}$ is presented. In Section \ref{sec: formula relationship}, an isomorphic relationship between PPTL and PPTL$^{\tiny\mbox{SL}}$ formulas is obtained. As a result, how the decision procedure for checking satisfiability of PPTL to be reused on PPTL$^{\tiny\mbox{SL}}$ is illustrated Section \ref{sec: Normal Form and NFG}. Conclusions are drawn in Section \ref{sec: conclusion}.

\section{The Two-Dimensional Logic PPTL$^{\tiny\mbox{SL}}$}\label{sec: hybrid logic}

The satisfiability problem for full separation logic is known to be undecidable \cite{Computability and Complexity Results for a Spatial Assertion Language for Data Structures}. In this section, we first introduce a decidable fragment of separation logic (SL for short) which is able to describe linked list structures. Then we make a temporal extension to SL by adding specific temporal operations in PPTL.

\subsection{A Fragment of Separation Logic for Linked Lists}\label{separation logic}

The fragment of separation logic presented here is a variation of the one in \cite{On the Almighty Wand}. We assume a countable infinite set $Var$ of variables with a fixed ordering, ranged over by $x, y, z, \ldots$. Let $Loc$ be a finite set of valid locations composed of the natural numbers greater than zero. $Val=Loc \cup \{\, 0 \,\}$ denotes the set of values which are either locations or $0$. The constant $0$ represents the null location. We refer to a pair $(I_{s},I_{h})$ as a memory state $s$, where $I_{s} \, : \, Var \rightharpoonup Val$ represents a stack and $I_{h} \, : \, Loc \rightharpoonup Val$ a heap.

\subsubsection{Syntax} Formulas of SL are defined by the grammar below, $n$ is a natural number:

\begin{longtable}{lrcl}
    \mbox{Terms} &$e$ &$\, ::= \,$ &$n \mid x$ \\
    SL Formulas &\quad$\phi$ &$\; ::= \;$ &$e_{1}=e_{2} \mid e_{1} \mapsto e_{2} \mid \neg \phi \mid \phi_{1} \vee \phi_{2} \mid \phi_{1} \text{\scriptsize{\#}} \phi_{2} \mid \exists x : \phi$
\end{longtable}
\noindent %
We will make use of standard notations as usual for other derived connectives. We write $dom(f)$ to denote the domain of mapping $f$. Given two mappings $f_{1}$ and $f_{2}$, $f_{1} \perp f_{2}$ means $f_{1}$ and $f_{2}$ with disjoint domains. Moreover, we use $f_{1} \cdot f_{2}$ to denote the union of $f_{1}$ and $f_{2}$ which is undefined when $f_{1} \not \perp f_{2}$. Formula $e_{1} \mapsto e_{2}$ denotes that $e_{1}$ points to $e_{2}$, where $e_{1}$ represents an address in the heap and $e_{2}$ the value held in that address.

\subsubsection{Semantics} For every term $e$, the evaluation of $e$ relative to a state $(I_{s},I_{h})$ is defined as $(I_{s},I_{h})[e]$.
{
\begin{longtable}{rclcl}
    \multicolumn{5}{c}{$(I_{s},I_{h})[n] = n \qquad (I_{s},I_{h})[x] = I_{s}(x)$}
\end{longtable}
}
\noindent %
The semantics of SL formulas is given below by a relation $\models_{_{SL}}$ equipped with a subscript $SL$.
{
\begin{longtable}{rclcl}
    $I_{s},I_{h}$ &$\models_{_{SL}}$ &$e_{1}=e_{2}$ &\;iff\;  &$(I_{s},I_{h})[e_{1}] = (I_{s},I_{h})[e_{2}]$. \\
    $I_{s},I_{h}$ &$\models_{_{SL}}$ &$e_{1} \mapsto e_{2}$ &\;iff\;  &$dom(I_{h})=\{\, (I_{s},I_{h})[e_{1}] \,\}$ and $I_{h}((I_{s},I_{h})[e_{1}]) = (I_{s},I_{h})[e_{2}]$. \\
    $I_{s},I_{h}$ &$\models_{_{SL}}$ &$\neg \phi$ &\;iff\;  &$I_{s},I_{h} \not\models_{_{SL}} \phi$. \\
    $I_{s},I_{h}$ &$\models_{_{SL}}$ &$\phi_{1} \vee \phi_{2}$ &iff  &$I_{s},I_{h} \models_{_{SL}} \phi_{1} \text{ or } I_{s},I_{h} \models_{_{SL}} \phi_{2}$. \\
    $I_{s},I_{h}$ &$\models_{_{SL}}$ &$\phi_{1} \text{\scriptsize{\#}} \phi_{2}$ &iff  &$\text{there exist } I_{h_{1}}, I_{h_{2}} : I_{h_{1}} \perp I_{h_{2}} \text{ and } I_{h} = I_{h_{1}} \cdot I_{h_{2}} \text{ and }$ \\
    &&& &$I_{s},I_{h_{1}} \models_{_{SL}} \phi_{1} \text{ and } I_{s},I_{h_{2}} \models_{_{SL}} \phi_{2}$. \\
    $I_{s},I_{h}$ &$\models_{_{SL}}$ &$\exists x : \phi$ &iff  &$\text{there exists } v \in Val \text{ such that } I_{s}[x \rightarrow v], I_{h} \models_{_{SL}} \phi$.
\end{longtable}
}

\subsubsection{Derived formulas} We can also present the following derived formulas which can be seen as a series of useful properties expressed in SL.

\begin{longtable}{rcl}
    $e_{1} \hookrightarrow e_{2}$ &$\overset{\scriptsize\mbox{def}}{=}$& $e_{1} \mapsto e_{2} \text{\scriptsize{\#}} true \quad\;\; alloc(e) \overset{\scriptsize\mbox{def}}{=} \exists x : e \hookrightarrow x$  \\
    $emp$ &$\overset{\scriptsize\mbox{def}}{=}$& $\neg \exists x : alloc(x)$ \\
    $\sharp e \geq n$ &$\overset{\scriptsize\mbox{def}}{=}$& $e \neq 0 \wedge \overset{n\text{ times}}{\overbrace{(\exists y : y \mapsto e) \text{\scriptsize{\#}} \cdots \text{\scriptsize{\#}} (\exists y : y \mapsto e)}} \text{\scriptsize{\#}} true$ \\
    $\sharp e \leq n$ &$\overset{\scriptsize\mbox{def}}{=}$& $e \neq 0 \wedge \neg \big(\overset{n+1\text{ times}}{\overbrace{(\exists y : y \mapsto e) \text{\scriptsize{\#}} \cdots \text{\scriptsize{\#}} (\exists y : y \mapsto e)}} \text{\scriptsize{\#}} true \big)$ \\
    $\sharp e = n$ &$\overset{\scriptsize\mbox{def}}{=}$& $e \neq 0 \wedge \big( \overset{n\text{ times}}{\overbrace{(\exists y : y \mapsto e) \text{\scriptsize{\#}} \cdots \text{\scriptsize{\#}} (\exists y : y \mapsto e)}} \text{\scriptsize{\#}} true \big) \wedge$ \\
    &&$\neg \big(\overset{n+1\text{ times}}{\overbrace{(\exists y : y \mapsto e) \text{\scriptsize{\#}} \cdots \text{\scriptsize{\#}} (\exists y : y \mapsto e)}} \text{\scriptsize{\#}} true \big)$ \\
    $e_{1} \overset{\circlearrowleft}{\longrightarrow}^{+} e_{2}$ &$\, \overset{\scriptsize\mbox{def}}{=}$ \,& $alloc(e_{1}) \wedge (e_{2} \neq e_{1} \rightarrow \neg alloc(e_{2}) \wedge \sharp e_{1}=0) \wedge$ \\
    &&$(\forall x : x \neq e_{2} \rightarrow (\sharp x=1 \rightarrow alloc(x))) \wedge$ \\
    &&$(\forall x : x \neq 0 \rightarrow \sharp x \leq 1)$ \\
    $ls(e_{1},e_{2})$ &$\, \overset{\scriptsize\mbox{def}}{=}$ \,& $e_{1} \overset{\circlearrowleft}{\longrightarrow}^{+} e_{2} \wedge \neg (e_{1} \overset{\circlearrowleft}{\longrightarrow}^{+} e_{2} \text{\scriptsize{\#}} \neg emp)$
\end{longtable}
\noindent %
Formula $e_{1} \hookrightarrow e_{2}$ has a weaker meaning than $e_{1} \mapsto e_{2}$ since the domain of the heap of the former may contains other allocated heap cells in addition to $e_{1}$. $alloc(e)$ indicates that the cell $e$ is allocated in the current heap. $emp$ is true just for the empty heap whose domain is $\emptyset$. $\sharp e \geq n$ holds in case that $e$ has at least $n$ predecessors. $\sharp e = n$ and $\sharp e \leq n$ can be obtained by obvious combinations of comparison predicates. A state $(I_{s}, I_{h})$ satisfies $e_{1} \overset{\circlearrowleft}{\longrightarrow}^{+} e_{2}$ indicating that $I_{h}$ can be decomposed as a list segment between $e_{1}$ and $e_{2}$ and a finite collection of cyclic lists. In addition, $ls(e_{1},e_{2})$ describes a list segment starting at the location denoted by $e_{1}$ whose last link contains the value of $e_{2}$, in particular, $ls(e,0)$ is a complete linked list and $ls(e,e)$ is a cyclic linked list.

\subsection{Temporal Extension to Separation Logic}

In order to express temporal properties of heap systems, we integrate SL with PPTL. The two-dimensional logic is named as PPTL$^{\tiny\mbox{SL}}$. Let $Prop$ be a countable set of atomic propositions. Formulas $Q$ of PPTL and $P$ of PPTL$^{\tiny\mbox{SL}}$ are given by the following grammar, respectively,
\begin{longtable}{lrcl}
    PPTL Formulas &\quad$Q$ &$\; ::= \;$ &$q \mid \neg Q \mid Q_{1} \vee Q_{2} \mid {\bigcirc}Q \mid (Q_{1},\ldots,Q_{m}) \, prj \, Q \mid Q^{*}$ \\
    PPTL$^{\tiny\mbox{SL}}$ Formulas &\quad$P$ &$\; ::= \;$ &$\phi \mid \neg P \mid P_{1} \vee P_{2} \mid {\bigcirc}P \mid (P_{1},\ldots,P_{m}) \, prj \, P \mid P^{*}$
\end{longtable}
\noindent %
where $q \in Prop$, $\phi$ denotes SL formulas and $P_{1},\ldots,P_{m}$ are all well-formed PPTL$^{\tiny\mbox{SL}}$ formulas ($Q_{1},\ldots,Q_{m}$ are all well-formed PPTL formulas). $\bigcirc$ (next), $prj$ (projection) and $^{*}$ (star) are basic temporal operators. A formula is called a state formula if it does not contain any temporal operators, otherwise it is a temporal formula.

An interval $\sigma = \langle s_{0},s_{1},\ldots \rangle$ is a sequence of states, possibly finite or infinite. $\epsilon$ denotes an empty interval. The length of $\sigma$, denoted by $|\sigma|$, is $\omega$ if $\sigma$ is infinite, otherwise it is the number of states minus one. To have a uniform notation for both finite and infinite intervals, we will use extended integers as indices. That is, we consider the set $N_{0}$ of non-negative integers and $\omega$, define $N_{\omega}=N_{0} \cup \{\, \omega \,\}$, and extend the comparison operators, $=$, $<$, $\leq$, to $N_{\omega}$ by considering $\omega=\omega$, and for all $i \in N_{0}$, $i < \omega$. Moreover, we define $\preceq$ as $\leq - \{\, (\omega,\omega) \,\}$. With such a notation, $\sigma_{(i .. j)}(0 \leq i \preceq j \leq |\sigma|)$ denotes the sub-interval $\langle s_{i},\ldots,s_{j} \rangle$ and $\sigma^{(k)}(0 \leq k \preceq |\sigma|)$ denotes the suffix interval $\langle s_{k},\ldots,s_{|\sigma|} \rangle$ of $\sigma$. The concatenation of $\sigma$ with another interval (or empty string) $\sigma'$ is denoted by $\sigma \cdot \sigma'$. Further, let $\sigma = \langle s_{k},\ldots,s_{|\sigma|} \rangle$ be an interval and $r_{1},\ldots,r_{h}$ be integers $(h \geq 1)$ such that $0 \leq r_{1} \leq r_{2} \leq \cdots \leq r_{h} \preceq |\sigma|$. The projection of $\sigma$ onto $r_{1},\ldots,r_{h}$ is the interval (called projected interval), $\sigma \downarrow (r_{1},\ldots,r_{h}) = \langle s_{t_{1}}, \ldots, s_{t_{l}} \rangle$, where $t_{1},\ldots,t_{l}$ is obtained from $r_{1},\ldots,r_{h}$ by deleting all duplicates. That is, $t_{1},\ldots,t_{l}$ is the longest strictly increasing subsequence of $r_{1},\ldots,r_{h}$. For example,
\begin{eqnarray}
    \langle s_{0},s_{1},s_{2},s_{3},s_{4} \rangle \downarrow (0,0,2,2,2,3) = \langle s_{0}, s_{2}, s_{3} \rangle \nonumber
\end{eqnarray}



An interpretation for a PPTL$^{\tiny\mbox{SL}}$ formula is a triple $\mathcal{I} = (\sigma, k, j)$ where $\sigma = \langle s_{0},s_{1},\ldots\rangle$ is an interval, $k$ a non-negative integer and $j$ an integer or $\omega$ such that $0 \leq k \preceq j \leq |\sigma|$. We write $(\sigma,k,j) \models P$ to mean that a formula $P$ is interpreted over a sub-interval $\sigma_{(k .. j)}$ of $\sigma$ with the current state being $s_{k}$. The notation $s_{k}=(I^{k}_{s},I^{k}_{h})$ indexed by $k$ represents the $k$-th state of an interval $\sigma$. The satisfaction relation for PPTL$^{\tiny\mbox{SL}}$ formulas $\models$ is defined as follows.
{\small
\begin{longtable}{rclcl}
    $\mathcal{I}$ &$\models$ &$\phi$ &$\text{ iff }$ &$I^{k}_{s},I^{k}_{h} \models_{_{SL}} \phi$. \\
    $\mathcal{I}$ &$\models$ &$\neg P$ &$\text{ iff }$ &$\mathcal{I} \not\models P$. \\
    $\mathcal{I}$ &$\models$ &$P_{1} \vee P_{2}$ &$\text{ iff }$ &$\mathcal{I} \models P_{1} \text{ or } \mathcal{I} \models P_{2}$. \\
    $\mathcal{I}$ &$\models$ &${\bigcirc}P$ &$\text{ iff }$ &$k < j \text{ and } (\sigma,k+1,j) \models P$. \\
    $\mathcal{I}$ &$\models$ &$(P_{1},\ldots,P_{m}) \, \text{$prj$} \, P$ &$\text{ iff }$ &$\text{there exists integers $k=r_{0} \leq r_{1} \leq \cdots \leq r_{m} \preceq j$ such that}$ \\
    \multicolumn{5}{l}{$\text{$(\sigma,r_{0},r_{1}) \models P_{1}$, $(\sigma,r_{l-1},r_{l}) \models P_{l}$}(\text{for $1 < l \leq m)$,}$ and $(\sigma',0,|\sigma'|) \models P$ $\text{for one of the $\sigma'$:}$} \\
    \multicolumn{5}{l}{$\text{(a) $r_{m} < j$ and $\sigma'=\sigma \downarrow (r_{0},\ldots,r_{m}) \cdot \sigma_{(r_{m}+1 .. j)}$}$} \\
    \multicolumn{5}{l}{$\text{(b) $r_{m} = j$ and $\sigma'=\sigma \downarrow (r_{0},\ldots,r_{h})$}$ $\text{for some $0 \leq h \leq m$.}$} \\
    $\mathcal{I}$ &$\models$ &$P^{*}$ &$\text{ iff }$ &$\text{there are finitely many $r_{0},\ldots,r_{n} \in N_{\omega}$ such that}$ \\
    \multicolumn{5}{l}{$k=r_{0} \leq r_{1} \leq \cdots \leq r_{n-1} \preceq r_{n}=j (n \geq 0)$ and $\text{$(\sigma,r_{0},r_{1}) \models P$ and for all $1 < l \leq n$}$} \\
    \multicolumn{5}{l}{$(\sigma,r_{l-1},r_{l}) \models P$; or there are infinitely many integers $k=r_{0} \leq r_{1} \leq r_{2} \leq \cdots$} \\
    \multicolumn{5}{l}{such that $\lim\limits_{i\to\infty}r_{i}=\omega$ and $(\sigma,r_{0},r_{1}) \models P$ and for all $l > 1$ $(\sigma,r_{l-1},r_{l}) \models P.$}
\end{longtable}
}

A formula $P$ is satisfied over an interval $\sigma$, written $\sigma \models P$, if $(\sigma,0,|\sigma|) \models P$ holds. When $\sigma \models P$ holds for some interval $\sigma$, we say that formula $P$ is satisfiable. A formula $P$ is valid, denoted by $\models P$, if $\sigma \models P$ holds for all $\sigma$. Also we have the following derived formulas:
\begin{longtable}{rclrclrcl}
    \normalsize{$\varepsilon$} &\normalsize{$\overset{\text{def}}{=}$} &\normalsize{$\neg \bigcirc true$} &\quad \normalsize{$P_{1};P_{2}$} &\normalsize{$\overset{\text{def}}{=}$} &\normalsize{$(P_{1},P_{2}) \text{ $prj$ } \varepsilon$} &\normalsize{$P^{+}$} &\normalsize{$\overset{\text{def}}{=}$} &\normalsize{$P ; P^{*}$} \\
    \normalsize{$\lozenge P$} &\normalsize{$\overset{\text{def}}{=}$} &\normalsize{$true ; P$} & \normalsize{$\square P$} &\normalsize{$\overset{\text{def}}{=}$} &\normalsize{$\neg \lozenge \neg P$} & \normalsize{$\bigcirc^{n} P$} &\normalsize{$\overset{\text{def}}{=}$} &\normalsize{$\bigcirc (\bigcirc^{n-1} P), n \geq 1$}
\end{longtable}

Note that we use a finite set of natural numbers to denote locations. The main reason for this is that we want to preserve the decidability of PPTL$^{\tiny\mbox{SL}}$ while at the same time expressing more recursive heap properties. SL allows existential quantifiers, hence we can define properties about linked list using them. However, PPTL$^{\tiny\mbox{SL}}$ will be undecidable if the set of locations is infinite. Another way is to drop existential quantifiers in SL and keep the locations infinite. We will probably obtain a decidable logic but it is unable to describe complex heap properties.

\subsection{Specifying Heap Evolution Properties with PPTL$^{\tiny\mbox{SL}}$}

Consider the following C-like program, that first creates a linked list of some certain length (left part), then reverses its reference direction (right part). NULL is a macro for zero.
\begin{Verbatim}[commandchars=\\\{\},codes={\catcode`$=3\catcode`^=7},numbersep=-15pt,xleftmargin=0cm]
    \small{struct Node \{}
      \small{struct Node *next;}
    \small{\};}
    \small{function cre_rev() \{}
      \small{Node *x, *y, *t; int cnt := 0;}       \small{y := NULL;}
      \small{x := NULL;}                           \small{while(x != NULL) \textcolor{red}{\{\, \textcircled{\small{3}} \,\}} \{}
      \small{while(cnt < 100) \{}                       \small{t := x->next;}
        \small{t := new(Node);}                      \small{x->next := y;}
        \small{t->next := x;}                        \small{y := x;}
        \small{x := t;}                              \small{x := t;}
        \small{cnt := cnt+1;}                      \small{\}} \textcolor{red}{\{\, \textcircled{\small{2}} \,\}}
      \small{\}} \textcolor{red}{\{\, \textcircled{\small{1}} \,\}}                          \small{\}}
\end{Verbatim}
\noindent %
Some of the important state assertions specified by SL are labeled with \textcircled{\small{1}}, \textcircled{\small{2}} and \textcircled{\small{3}}, respectively:
\begin{eqnarray}
    &&\textcircled{\small{1}} (x=0 \wedge emp) \vee ls(x, 0) \;\;\; \textcircled{\small{2}} (y=0 \wedge emp) \vee ls(y, 0) \;\;\; \textcircled{\small{3}} (\textcircled{\small{1}} \text{\scriptsize{\#}} \textcircled{\small{2}}) \nonumber
\end{eqnarray}
Properties of interest for this program include the temporal relations among these state assertions, for instance:

(1) Two events happen sequentially: the first one is to create a list whose head pointer is $x$ resulting in \textcircled{\small{1}}, and the second is to reverse the list such that the head pointer of the resulting list will be $y$ leading to \textcircled{\small{2}}. More precisely, this property integrates heap shape property with interval property. Heap shape property can be expressed by \textcircled{\small{1}} and \textcircled{\small{2}}, and interval property by chop connective ``$;$''. PPTL$^{\tiny\mbox{SL}}$ formula $\lozenge \textcircled{\small{1}} ; \lozenge \textcircled{\small{2}}$ can expresses this property. It means sometimes in the heap, there only exists a complete linked list whose head pointer is $x$, and later the list becomes reversed with $y$ being the head.

(2) After the list is created, $x$ and $y$ will point to distinct lists (represented by $\textcircled{\small{3}}$) that repeatedly holds for several times. This property integrates heap non-interference property with loop property which can be described by the formula $\lozenge ((\bigcirc^{4} (\textcircled{\small{3}}))^{*})$. The formal property is treated by ``$\text{\scriptsize{\#}}$'', and the latter by star connective ``*''. Note that the formula in this example has an assumption that each statement executes in a unit interval. Eventually $\textcircled{\small{3}}$ holds for several times during the execution of the list reversal sub-program.

We can see that (1) and (2) are typical heap evolution properties that can be expressed neither by separation logic nor by temporal logics \cite{Verifying Temporal Heap Properties Specified via Evolution Logic,Safety and Liveness in Concurrent Pointer Programs,Verification of Pointer Programs,Model Checking Dynamic Memory Allocation in Operating Systems,Reasoning About Sequences of Memory States}. However, we can easily and clearly express them with PPTL$^{\tiny\mbox{SL}}$.

\section{Isomorphic Relationship Between PPTL and PPTL$^{\tiny\mbox{SL}}$}\label{sec: formula relationship}

PPTL and PPTL$^{\tiny\mbox{SL}}$ are closely related in their syntax structures since the only difference is the state assertions. In this section, an isomorphic relationship between PPTL and PPTL$^{\tiny\mbox{SL}}$ is presented. To do so, as depicted in Fig.\ref{relation}, first, we reduce a PPTL$^{\tiny\mbox{SL}}$ formula to an equisatisfiable PPTL$^{\tiny\mbox{SL}}$ formula in a restricted form which is a strict subset of PPTL$^{\tiny\mbox{SL}}$ (referred to as \emph{restricted PPTL$^{\tiny\mbox{SL}}$}). Second, an isomorphic relationship is built between PPTL formulas and the restricted PPTL$^{\tiny\mbox{SL}}$ formulas according to their syntax structures. To take an example of formula isomorphism, PPTL formula $Q \equiv p ; q$ is isomorphic to restricted PPTL$^{\tiny\mbox{SL}}$ formula $P \equiv x=0 ; y=0$ since their syntax structures are the same except the atomic formulas. $Q$ will be changed into $P$ if $p$ is replaced with $x=0$ and $q$ with $q=0$, and vice versa.

\begin{figure}[htb]
\centering
\includegraphics[scale=0.3]{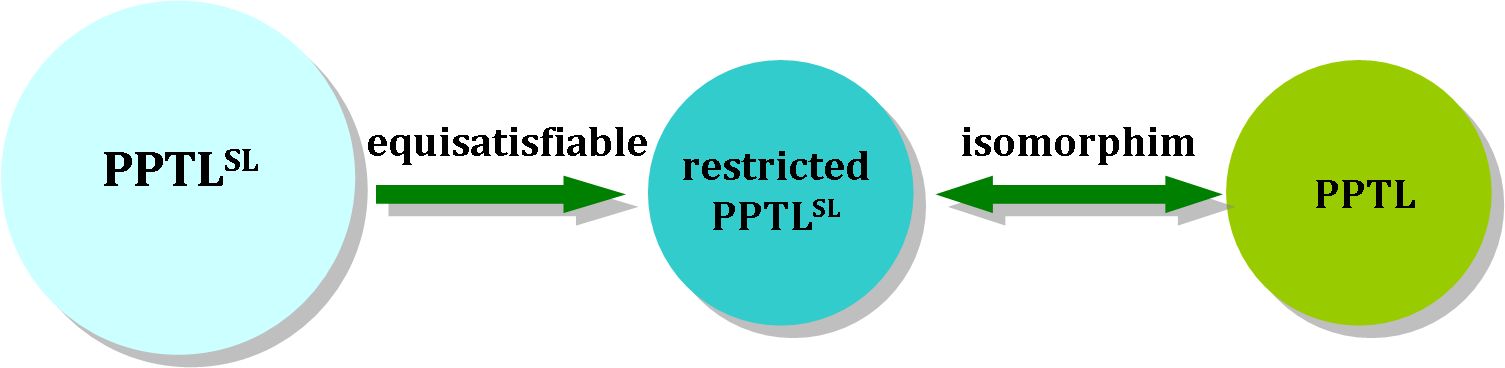}
\caption{\label{relation} The relationship between PPTL$^{\tiny\mbox{SL}}$ and PPTL}
\end{figure}

\subsection{Equisatisfiable Translation}

Two formulas are equisatisfiable if the first formula is satisfiable whenever the second is and vice versa. In other words, either both formulas are satisfiable or both are not. Two equisatisfiable formulas may have different models, provided they both have some or both have none. To start with, an equisatisfiable encoding for PPTL$^{\tiny\mbox{SL}}$ is proposed.

Calcagno et al. \cite{From Separation Logic to First-Order Logic} have already encoded the fragment of propositional separation logic into first-order logic. With the method in \cite{From Separation Logic to First-Order Logic}, we first encode the fragment of first-order separation logic (SL) into quantifier-free first-order logic so as to make the encoding closer to PPTL state formulas. The bounding property for satisfiability of state formulas is given which will be used to preserve the correctness of our translation. Bounded stacks and bounded heaps are defined as follows.
\begin{mydefinition}\label{def: Bounded states}
    {\rm A bounded stack written as $I_{s}[X]$ denotes the set of stacks such that $I_{s} \in I_{s}[X]$ iff $dom(I_{s}) = X$, where $X \subseteq Var$. A bounded heap written as $I_{h}[n]$ denotes the set of heaps such that $I_{h} \in I_{h}[n]$ iff $|dom(I_{h})| \leq n$, where $n \in \mathbb{N}$.} \qed
\end{mydefinition}


Given a heap $I_{h} \in I_{h}[n]$, we will use a vector $c$ of $n$ pairs of values, $((c_{1,1}, c_{1,2}), \ldots,$ $(c_{n,1}, c_{n,2}))$, to represent that heap. If $c_{i,1} = 0$, the $i$-th pair does not represent an active heap cell, otherwise the cell is allocated at location $c_{i,1}$ and contains the value $c_{i,2}$. For example, $I_{h}[2]$ is a set of heaps which contains the heap of size one $I_{h} = \{\, (1, 2) \,\}$. Additionally, a vector allows the same location occurring more than once that should be avoid, e.g., $((1,2), (1,2))$ or $((1,2), (1,3))$ does not represent a \emph{valid} (or well-formed) heap. In order to overcome this problem, the partial function $vh_{n}$ is employed, $vh_{n} : (Val \times Val)^{n} \rightharpoonup I_{h}[n]$. In particular,
\begin{eqnarray}
    && vh_{n}(c) =
    \begin{cases}
        \begin{split}
            &\mbox{Undef} \quad \mbox{if $\exists i, j : 1 \leq i, j \leq n, i \neq j, c_{i,1}=c_{j,1}, c_{i,1} \neq 0$ and $c_{j,1} \neq 0$}, \\
        \end{split} \\
        \{\, (c_{i,1}, c_{i,2}) \mid c_{i,1} \neq 0 \text{ and } 1 \leq i \leq n \,\}, \quad \mbox{otherwise}.
    \end{cases} \nonumber
\end{eqnarray}

Let $C$ denote a vector of pairs of variables, and $|C|$ indicates the number of variable pairs in $C$. If a vector $c$ with the same size is assigned to $C$, $C$ will also potentially represent a heap. In the following, the assertions about heaps in SL can be encoded as state formulas in PPTL$^{\tiny\mbox{SL}}$ in the following grammar
\begin{longtable}{lrcl}
    &$\phi_{s}$ &$\; ::= \;$ &$e_{1}=e_{2} \mid \neg \phi_{s} \mid \phi_{s_{1}} \vee \phi_{s_{2}}$
\end{longtable}

\noindent %
Given a vector of values $c = ((c_{1,1}, c_{1,2}), \ldots,$ $(c_{n,1}, c_{n,2}))$ and a vector of variables $C = ((C_{1,1}, C_{1,2}), \ldots, (C_{n,1}, C_{n,2}))$, we write $[C \Leftarrow c]$ to denote the pointwise assignment of values to the variables which is also considered as a set of pairs $\{\, (C_{i,1}, c_{i,1}), (C_{i,2}, c_{i,2}) \mid 1 \leq i \leq n \,\}$. The binary operation $\textcircled{\text{\scriptsize{\#}}}$ on vectors is in fact a formula defined in Definition \ref{def: decomposition}. It is adopted for capturing the meaning of separation conjunction \text{\scriptsize{\#}}.

\begin{mydefinition}[Vector Decomposition]\label{def: decomposition}
    {\rm For vectors of variables $C, C'$ and $C''$ such that $|C| = |C'| = |C''|$, we say $C$ is decomposed as $C'$ and $C''$, defined as
        \begin{equation}
            C = C' \textcircled{\text{\scriptsize{\#}}} C'' \overset{\text{def}}{=} \bigwedge_{i \in \{\, 1, \cdots, |C| \,\}} \left(
            \begin{split}
                \left( C'_{i,1}=C_{i,1} \wedge C''_{i,1}=0 \wedge C'_{i,2}=C_{i,2} \right) & \\
                \vee \left( C'_{i,1}=0 \wedge C''_{i,1}=C_{i,1} \wedge C''_{i,2}=C_{i,2} \right) &
            \end{split} \right) \nonumber
        \end{equation} \qed
    }
\end{mydefinition}

In the sequel, a set of pairs $D = \{\, (x_{1}, y_{1}), (x_{2}, y_{2}), \ldots \,\}$ with $\nexists (x,y), (x,z) \in D$ and $y \neq z$ is sometimes implicitly interpreted as a function. Conversely, a function $f$ can be interpreted as a set of pairs $\{\, (x, f(x)) \mid x \in dom(f) \,\}$. The standard notation $\bigvee_{i \in \{\, 1, \ldots, n \,\}} \phi_{s}$ is used to represent $\phi_{s}[1/i] \vee \cdots \vee \phi_{s}[n/i]$, and similarly for $\bigwedge_{i \in \{\, 1, \ldots, n \,\}} \phi_{s}$. As usual, the notation $fv(\phi)$ denotes the set of free variables occurring in $\phi$, which may be used to vectors, such as $fv(C)$.

\begin{lemma}\label{lemma: state formulas equisatisfiable}
    {\rm For any state formula $\phi$, variable vector $C$ and value vector $c$ where $|C| = |c| = n$, $(I_{s}, I_{h}) \in (I_{s}[fv(\phi)], I_{h}[n])$, $vh_{n}(c)=I_{h}$ and $fv(\phi) \cap fv(C) = \emptyset$, there exists a $\phi_{s}$ such that
    \begin{eqnarray}
        (I_{s}, I_{h}) \models_{_{SL}} \phi \quad \text{ iff } \quad (I_{s} \cup [C \Leftarrow c], \emptyset) \models_{_{SL}} \phi_{s} \nonumber
    \end{eqnarray}
    }
\end{lemma}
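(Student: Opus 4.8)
The plan is to prove the lemma constructively, by exhibiting an explicit translation $\phi \mapsto \phi_{s}$ defined by structural induction on $\phi$ — essentially the separation‑elimination procedure of Calcagno et al.~\cite{From Separation Logic to First-Order Logic} specialised to the present bounded setting so that the output never leaves the quantifier‑free $\phi_{s}$ grammar. The guiding observation is that, once heaps are bounded (a heap in $I_{h}[n]$ is fully described by the $n$ slots of $C$ via $vh_{n}$, cf.\ Definition~\ref{def: Bounded states}), the second‑order flavour of $\text{\scriptsize{\#}}$ — ``there is a way to split the heap'' — becomes a \emph{finite} choice, namely a partition of the slot indices $\{1,\ldots,n\}$, which is exactly the per‑slot decision taken in the vector‑decomposition formula of Definition~\ref{def: decomposition}; and since $Val = Loc \cup \{0\}$ is finite, $\exists x$ becomes a finite disjunction over $Val$. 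Accordingly I carry an index set $S \subseteq \{1,\ldots,n\}$ as a parameter of the translation (the slots currently in scope), write $\tau_{S}(\phi)$ for the translated formula, start the construction from $S=\{1,\ldots,n\}$, and prove the slightly stronger statement by induction on $\phi$: for every SL formula $\phi$, every $S \subseteq \{1,\ldots,n\}$, every stack $I_{s}$ with $dom(I_{s}) \supseteq fv(\phi)$ and $dom(I_{s}) \cap fv(C) = \emptyset$, and every value vector $c$ with $vh_{n}(c)$ defined, writing $h_{S} = \{\,(c_{i,1},c_{i,2}) \mid i \in S,\ c_{i,1}\neq 0\,\}$, we have $(I_{s}, h_{S}) \models_{_{SL}} \phi$ iff $(I_{s} \cup [C \Leftarrow c],\, \emptyset) \models_{_{SL}} \tau_{S}(\phi)$. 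The lemma is the special case $S=\{1,\ldots,n\}$, where $h_{S}=vh_{n}(c)=I_{h}$ and $\phi_{s} := \tau_{\{1,\ldots,n\}}(\phi)$.

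The clauses of $\tau_{S}$ are: $\tau_{S}(e_{1}=e_{2}) := e_{1}=e_{2}$; $\tau_{S}(\neg\phi) := \neg\tau_{S}(\phi)$ and $\tau_{S}(\phi_{1}\vee\phi_{2}) := \tau_{S}(\phi_{1}) \vee \tau_{S}(\phi_{2})$ (same $S$); for the points‑to atom,
\[ \tau_{S}(e_{1}\mapsto e_{2}) := e_{1}\neq 0 \;\wedge\; \bigvee_{i\in S}\Big( C_{i,1}=e_{1} \wedge C_{i,2}=e_{2} \wedge \bigwedge_{j\in S\setminus\{i\}} C_{j,1}=0 \Big), \]
saying that among the slots of $S$ exactly one is active, it is located at $e_{1}$ and carries $e_{2}$; for separating conjunction, $\tau_{S}(\phi_{1}\,\text{\scriptsize{\#}}\,\phi_{2}) := \bigvee_{S = S_{1}\uplus S_{2}} \big(\tau_{S_{1}}(\phi_{1}) \wedge \tau_{S_{2}}(\phi_{2})\big)$, the disjunction ranging over all partitions of $S$; and for the quantifier, after $\alpha$‑renaming so the bound variable avoids $fv(C)$, $\tau_{S}(\exists x:\phi) := \bigvee_{v\in Val} \tau_{S}(\phi)[v/x]$. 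Each clause stays inside the $\phi_{s}$ grammar because $\wedge$, $\neq$ and the iterated $\bigvee,\bigwedge$ are derived from $=,\neg,\vee$, and because every $C_{i,j}$ and the constants $0, v\in Val$ are terms $e$. Each recursive call preserves the side conditions: $\text{\scriptsize{\#}}$ only shrinks $S$ and leaves $c,C,I_{s}$ untouched, while $\exists$ leaves $S,C$ untouched and only removes $x$ from $fv(\phi)$ (and, in the semantic statement, replaces $I_{s}$ by $I_{s}[x\mapsto v]$, still disjoint from $fv(C)$). Hence the induction is on the structure of $\phi$ and well‑founded.

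Correctness is then proved by that structural induction. The cases $e_{1}=e_{2}$ (heap‑independent), $\neg\phi$ and $\phi_{1}\vee\phi_{2}$ are immediate from the semantics of $\models_{_{SL}}$ and the induction hypothesis. For $e_{1}\mapsto e_{2}$ one unfolds its SL semantics — $dom(h_{S})=\{(I_{s},h_{S})[e_{1}]\}$ and $h_{S}((I_{s},h_{S})[e_{1}])=(I_{s},h_{S})[e_{2}]$, which in particular forces $(I_{s},h_{S})[e_{1}]\in Loc$, i.e.\ $\neq 0$ — and matches it slot by slot with the translated disjunction, using that $vh_{n}(c)$ is defined so no two active slots share a location. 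For $\phi_{1}\,\text{\scriptsize{\#}}\,\phi_{2}$ the key point is a correspondence between splittings $h_{S} = h' \cdot h''$ and partitions $S=S_{1}\uplus S_{2}$: every active slot of $S$ goes into exactly one of $h',h''$ because $dom(h')\cap dom(h'')=\emptyset$, yielding a partition with $h'=h_{S_{1}}$, $h''=h_{S_{2}}$; conversely any partition gives a legal splitting with $vh_{n}$ still defined on the sub‑vectors, since validity is inherited under restriction. One then applies the induction hypothesis to each side (inactive slots may drift to either side, adding only redundant disjuncts, which is harmless by monotonicity of $\vee$). For $\exists x:\phi$ one uses the SL semantics to obtain some $v\in Val$ with $I_{s}[x\mapsto v], h_{S} \models_{_{SL}} \phi$, applies the induction hypothesis at the stack $I_{s}[x\mapsto v]$, pushes the valuation of $x$ through the standard substitution lemma — legitimate because $x\notin fv(C)$ — to get $(I_{s}\cup[C\Leftarrow c],\emptyset) \models_{_{SL}} \tau_{S}(\phi)[v/x]$, and finiteness of $Val$ turns the existential into the finite disjunction; the converse direction is the same argument read backwards.

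I expect the only substantive difficulty — and the reason the two finiteness restrictions of PPTL$^{\tiny\mbox{SL}}$ ($Loc$ finite, heaps bounded by $n$) are genuinely needed — to be precisely the elimination of $\text{\scriptsize{\#}}$ and $\exists$ while staying inside the quantifier‑free equality fragment $\phi_{s}$; all the rest is routine structural recursion. The two points that need a little care are the bookkeeping for nested $\text{\scriptsize{\#}}$'s (every splitting of $h_{S}$ must be covered, handled by letting inactive slots go to either side) and the avoidance of variable capture when substituting $v$ for $x$ (handled by $\alpha$‑renaming together with the hypothesis $fv(\phi)\cap fv(C)=\emptyset$, which is preserved along the recursion). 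Beyond these I anticipate no obstacle.
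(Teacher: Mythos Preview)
Your proposal is correct and complete; the structural induction goes through for the reasons you give, and the two finiteness hypotheses are used exactly where you locate them.

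It does, however, take a genuinely different route from the paper for the separating conjunction. The paper's translation $f(\phi,C)$ keeps the single vector $C$ fixed throughout and, at each $\text{\scriptsize{\#}}$, introduces \emph{fresh} variable vectors $C',C''$ together with the vector-decomposition formula $C = C' \,\textcircled{\text{\scriptsize{\#}}}\, C''$ of Definition~\ref{def: decomposition}, translating the two conjuncts against $C'$ and $C''$ and then eliminating the fresh variables by a big disjunction over all value vectors $c_{1},c_{2}\in Val^{2|C|}$ substituted for $C',C''$; correctness is then deferred to the analogous Theorem~1 of Calcagno et al.~\cite{From Separation Logic to First-Order Logic}. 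Your translation instead carries an index set $S\subseteq\{1,\ldots,n\}$ as an explicit parameter and handles $\text{\scriptsize{\#}}$ by a disjunction over partitions $S=S_{1}\uplus S_{2}$, never introducing fresh variables at all. What your approach buys is a cleaner invariant (no capture issues beyond the bound $x$ in $\exists$, no need for Definition~\ref{def: decomposition}) and a tighter formula size, since you enumerate $2^{|S|}$ partitions rather than $|Val|^{4n}$ value tuples; what the paper's approach buys is uniformity with the cited first-order encoding and a translation whose signature matches the later temporal lift $F(P,C)$, where only a single $C$ is threaded through. Either argument suffices for the lemma.
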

\begin{proof}
    We use a function $f$ to map a PPTL$^{\tiny\mbox{SL}}$ state formula $\phi$ to a state formula $\phi_{s}$ which is heap-free and has been defined before. The recursive translation $f(\phi,C)$ takes $\phi$ and $C$ as two parameters and produces a state formula $\phi_{s}$. The variables in $\phi$ and $C$ are always disjoint taking the form of two different syntactic categories.
    \begin{eqnarray}
        f(e_{1}=e_{2}, C) &\, \overset{\text{def}}{=} \,& e_{1}=e_{2} \nonumber \\
        f(e_{1} \mapsto e_{2}, C) &\, \overset{\text{def}}{=} \,& \bigvee_{i \in \{\, 1, \cdots, |C| \,\}} \Bigg(
        \begin{split}
            & C_{i, 1} \neq 0 \wedge \bigwedge {}_{\substack{j \in \{\, 1, \cdots, |C| \,\} \\ i \neq j}} C_{j, 1}=0 \\
            & \wedge C_{i, 1}=e_{1} \wedge C_{i, 2}=e_{2}
        \end{split} \Bigg) \nonumber \\
        f(\neg \phi, C) &\, \overset{\text{def}}{=} \,& \neg f(\phi, C) \nonumber \\
        f(\phi_{1} \vee \phi_{2}, C) &\, \overset{\text{def}}{=} \,& f(\phi_{1}, C) \vee f(\phi_{2}, C) \nonumber \\
        f(\phi_{1} \text{\scriptsize{\#}} \phi_{2}, C) &\, \overset{\text{def}}{=} \,& \bigvee_{c_{2} \in Val^{2|C|}} \Big( \bigvee_{c_{1} \in Val^{2|C|}} \Big(
        \begin{split}
            & C=C' \textcircled{\text{\scriptsize{\#}}} C'' \wedge \\
            & f (\phi_{1}, C') \wedge f(\phi_{2}, C'')
        \end{split}
        \Big)[c_{1}/C'] \Big)[c_{2}/C''] \nonumber \\
        f(\exists x : \phi, C) &\, \overset{\text{def}}{=} \,& \bigvee_{v \in Val} f \left( \phi, C \right)[v/x] \nonumber
    \end{eqnarray}
    where both $C'$ and $C''$ are vectors with fresh variables, $[v/x]$ denotes the substitution of each occurrence of $x$ by $v$, and similarly for $[c/C]$ on vectors with pointwise substitution. One can draw the conclusion that $f(\phi, C)$ preserves the satisfaction of $\phi$ (similar to the proof of Theorem 1 given in \cite{From Separation Logic to First-Order Logic}). Therefore, the conclusion holds. \qed
\end{proof}

\begin{myexample}\label{example: singleton translation}
{\rm Consider the state formula $x \mapsto 0$, we can transform it into a state formula $\phi_{s}$ by $f$. Suppose $C = ((h_{1}, h_{1}'), (h_{2}, h_{2}'))$, the translation is
\begin{eqnarray}
     &&f(x \mapsto 0, C) \nonumber \\
     &=& f(x \mapsto 0, ((h_{1}, h_{1}'), (h_{2}, h_{2}'))) \nonumber \\
     &=& f((h_{1} \neq 0 \wedge h_{2} = 0 \wedge h_{1} = x \wedge h_{1}' = 0) \vee (h_{2} \neq 0 \wedge h_{1} = 0 \wedge h_{2} = x \wedge h_{2}' = 0)) \nonumber \\
     &=& (h_{1} \neq 0 \wedge h_{2} = 0 \wedge h_{1} = x \wedge h_{1}' = 0) \vee (h_{2} \neq 0 \wedge h_{1} = 0 \wedge h_{2} = x \wedge h_{2}' = 0) \nonumber
\end{eqnarray}
The rewritten result of the formula $(h_{1} \neq 0 \wedge h_{2} = 0 \wedge h_{1} = x \wedge h_{1}' = 0) \vee (h_{2} \neq 0 \wedge h_{1} = 0 \wedge h_{2} = x \wedge h_{2}' = 0)$ with any $C'$ is the same as that of $x \mapsto 0$ with $C$ since $f$ is a surjective. Observe that the result might be changed when the size of $C$ increases. However, this would not impact on the correctness of the translation, as only one heap cell is active.}
\end{myexample}

\begin{myexample}\label{example: sep translation}
    {\rm Consider the separation conjunction formula $x \mapsto 0 \text{\scriptsize{\#}} y \mapsto 0$, we can transform it into a formula $\phi_{s}$. Suppose $C = ((h_{1}, h_{1}'), (h_{2}, h_{2}'))$. According to the translation, we should select two fresh vectors $C'$ and $C''$ with same size of $C$, i.e., $C' = ((h_{3}, h_{3}'), (h_{4}, h_{4}'))$ and $C'' = ((h_{5}, h_{5}'), (h_{6}, h_{6}'))$.
    \begin{eqnarray}
        && \phi_{s_{1}} \nonumber \\
        &=& (C = C' \textcircled{\text{\scriptsize{\#}}} C'') \nonumber \\
        &=& ((h_{3}=h_{1} \wedge h_{5} = 0 \wedge h_{3}'=h_{1}') \vee (h_{3} = 0 \wedge h_{5}=h_{1} \wedge h_{5}'=h_{1}')) \wedge \nonumber \\
        && ((h_{4}=h_{2} \wedge h_{6} = 0 \wedge h_{4}'=h_{2}') \vee (h_{4} = 0 \wedge h_{6}=h_{2} \wedge h_{6}'=h_{2}')) \nonumber \\
        && \phi_{s_{2}} \nonumber \\
        &=&f(x \mapsto 0, C') \nonumber \\
        &=& f(x \mapsto 0, ((h_{3}, h_{3}'), (h_{4}, h_{4}'))) \nonumber \\
        &=& f((h_{3} \neq 0 \wedge h_{4} = 0 \wedge h_{3} = x \wedge h_{3}' = 0) \vee (h_{4} \neq 0 \wedge h_{3} = 0 \wedge h_{4} = x \wedge h_{4}' = 0)) \nonumber \\
        &=& (h_{3} \neq 0 \wedge h_{4} = 0 \wedge h_{3} = x \wedge h_{3}' = 0) \vee (h_{4} \neq 0 \wedge h_{3} = 0 \wedge h_{4} = x \wedge h_{4}' = 0) \nonumber \\
        && \phi_{s_{3}} \nonumber \\
        &=&f(y \mapsto 0, C'') \nonumber \\
        &=& f(y \mapsto 0, ((h_{5}, h_{5}'), (h_{6}, h_{6}'))) \nonumber \\
        &=& f((h_{5} \neq 0 \wedge h_{6} = 0 \wedge h_{5} = y \wedge h_{5}' = 0) \vee (h_{6} \neq 0 \wedge h_{5} = 0 \wedge h_{6} = y \wedge h_{6}' = 0)) \nonumber \\
        &=& (h_{5} \neq 0 \wedge h_{6} = 0 \wedge h_{5} = y \wedge h_{5}' = 0) \vee (h_{6} \neq 0 \wedge h_{5} = 0 \wedge h_{6} = y \wedge h_{6}' = 0) \nonumber \\
        && \phi_{s} \nonumber \\
        &=& f(x \mapsto 0 \text{\scriptsize{\#}} y \mapsto 0, C) \nonumber \\
        &\, = \,& f(x \mapsto 0 \text{\scriptsize{\#}} y \mapsto 0, ((h_{1}, h_{1}'), (h_{2}, h_{2}'))) \nonumber \\
        &\, = \,& \bigvee_{c_{2} \in Val^{2|C|}} \Big( \bigvee_{c_{1} \in Val^{2|C|}} (\phi_{s_{1}} \wedge \phi_{s_{2}} \wedge \phi_{s_{3}})[c_{1}/C'] \Big)[c_{2}/C''] \nonumber
    \end{eqnarray}
    }
\end{myexample}

\subsubsection{How to choose $C$}

Let us restrict our attention to $C$ which should be chosen carefully. For example, if we assume $C$ in Example \ref{example: sep translation} with size one, i.e., $C = ((h_{1}, h_{1}'))$. It is impossible to find a suitable model because the heap is expected to have exactly two cells for the formula $x \mapsto 0 \text{\scriptsize{\#}} y \mapsto 0$. But the size of $|C|$ equals to one. Hence the rewritten formula is equivalent to $false$. In a word, $|C|$ is important in the translation which should not be too small or it may result in finding no satisfiable model for a given formula. In the following we provide a basic definition which is useful in choosing $C$. To take one example, the size of $e_{1} \mapsto e_{2}$ is one because, in order to decide whether it or its negation is satisfiable, it is enough to consider heaps with at most one allocated location. Consequently, when translating $e_{1} \mapsto e_{2}$, the size of parameter $C$ should be one or larger.

\begin{mydefinition}[Size of State Formula]\label{def: size of formulas}
    {\rm Given a state formula $\phi$, its size $|\phi|$ is defined by
        \begin{longtable}{rclrclrcl}
            $|e_{1}=e_{2}|$ &$=$ &$0$ &\qquad$|e_{1} \mapsto e_{2}|$ &$=$ &$1$ &\qquad$|\phi_{1} \vee \phi_{2}|$ &$=$ &$max(|\phi_{1}|, |\phi_{2}|)$ \\
            &&&$|\neg \phi|$ &$=$ &$|\phi|$ &\quad$|\phi_{1} \text{\scriptsize{\#}} \phi_{2}|$ &$=$ &$|\phi_{1}| + |\phi_{2}|$
        \end{longtable} \qed
    }
\end{mydefinition}

The quantifier does not appear in the above definition since it can be expanded into a disjunction formula. Roughly speaking, $|C| = |\phi| + |fv(\phi)|$ is just enough \cite{From Separation Logic to First-Order Logic} to bound the size of heaps that need to be considered. The previous lemma describes how the state formulas are encoded. We now in the position to translate full PPTL$^{\tiny\mbox{SL}}$ formulas. Before treating the translation, let us define \emph{restricted PPTL$^{\tiny\mbox{SL}}$} (RPPTL$^{\tiny\mbox{SL}}$ for short) formulas.
\begin{longtable}{lrcl}
    &$P_{s}$ &$\; ::= \;$ &$e_{1} = e_{2} \mid \neg P_{s} \mid P_{s_{1}} \vee P_{s_{2}} \mid {\bigcirc}P_{s} \mid (P_{s_{1}},\ldots,P_{s_{m}}) \, prj \, P_{s} \mid P_{s}^{*}$
\end{longtable}
\noindent %
It is easy to find that $\phi_{s}$ serves as state formulas in $P_{s}$. The translation $F$ defined below helps us to take charge of mapping a PPTL$^{\tiny\mbox{SL}}$ formula to a RPPTL$^{\tiny\mbox{SL}}$ formula. Also, this function preserves the satisfaction of $P$ where $C$ is a vector of variables, and $\varphi$ denotes $e_{1}=e_{2}, e_{1} \mapsto e_{2}, \phi_{1} \text{\scriptsize{\#}} \phi_{2}$ or $\exists x : \phi$.
\begin{longtable}{rclrcl}
    $F(\varphi,C)$ &\, $\overset{\text{def}}{=}$ \,& $f(\varphi,C)$ &\qquad$F(\neg P,C)$ &\, $\overset{\text{def}}{=}$ \,& $\neg F(P,C)$ \nonumber \\
    $F(P_{1} \vee P_{2},C)$ &\, $\overset{\text{def}}{=}$ \,& $F(P_{1},C) \vee F(P_{2},C)$ &$F(\bigcirc P,C)$ &\, $\overset{\text{def}}{=}$ \,& $\bigcirc F(P,C)$ \nonumber \\
    \multicolumn{6}{l}{$F((P_{1},\ldots,P_{m}) \; prj \; P_{0},C)$ \, $\overset{\text{def}}{=}$ \, $(F(P_{1},C),\ldots,F(P_{m},C)) \; prj \; F(P_{0},C))$} \nonumber \\
    $F(P^{*},C)$ &\, $\overset{\text{def}}{=}$ \,& $F(P,C)^{*}$ &&& \nonumber
\end{longtable}

Now, let us prove below a crucial result. Basically stating that translating $P$ to $P_{s}$ by the above encoding $F$ also produces an equisatisfiable result for $P$. It will turn out to be useful later on. Note that there is only a single vector $C$ when translating $P$, because different values can be assigned to $C$ for the sake of representing heap evolutions in an interval. Given an interval $\sigma$, $\sigma[(I_{s}, I_{h}) / (I_{s}^{i}, I_{h}^{i})]$ is an interval obtained by replacing the $i$-th state $(I_{s}^{i}, I_{h}^{i})$ with $(I_{s}, I_{h})$.

\begin{theorem}\label{theorem: PPTLSL formulas equisatisfiable}
    {\rm For any PPTL$^{\tiny\mbox{SL}}$ formula $P$, intervals $\sigma = \langle \ldots, (I_{s}^{i}, I_{h}^{i}),\ldots \rangle$ and $\sigma'$, set of variable vectors $C_{\sigma} = \{\, \ldots, C, \ldots \,\}$, set of value vectors $c_{\sigma} = \{\, \ldots, c_{i}, \ldots \,\}$ where $fv(P) \cap fv(C) = \emptyset$, $|C_{\sigma}| = |c_{\sigma}| = |\sigma|$, $\sigma' = \sigma[\cdots, (I_{s}^{i} \cup [C \Leftarrow c_{i}], \emptyset) / (I_{s}^{i}, I_{h}^{i}),$ $\cdots]$, $|C| = |c_{i}| = n$, $(I_{s}^{i}, I_{h}^{i}) \in (I_{s}[fv(P)], I_{h}[n])$, and $vh_{n}(c_{i}) = I_{h}^{i}$, for all $i$,
    \begin{eqnarray}
        (\sigma, 0, |\sigma|) \models P \quad \text{ iff } \quad (\sigma', 0, |\sigma'|) \models F(P,C) \nonumber
    \end{eqnarray}
    }
\end{theorem}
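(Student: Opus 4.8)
The statement is the temporal/interval lift of Lemma~\ref{lemma: state formulas equisatisfiable}, so the plan is a structural induction on $P$ that threads the state-formula result through each temporal connective. The induction will be carried out on an interpretation $(\sigma,k,j)$ rather than only on $(\sigma,0,|\sigma|)$, strengthening the statement to: for every $0\le k\preceq j\le|\sigma|$ (and the correspondingly truncated $\sigma'$), $(\sigma,k,j)\models P$ iff $(\sigma',k,j)\models F(P,C)$. This strengthening is forced because $\bigcirc$, $prj$ and ${}^{*}$ all recurse on sub-intervals and suffixes, and the crucial invariant — that $\sigma'$ is obtained from $\sigma$ by the pointwise substitution $(I_s^i,I_h^i)\mapsto(I_s^i\cup[C\Leftarrow c_i],\emptyset)$ using \emph{one fixed} vector of variables $C$ and a family of value vectors $c_i$ with $vh_n(c_i)=I_h^i$ — is preserved under taking sub-intervals, suffixes, and (with care) projected intervals.

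\textbf{Base case.} When $P=\phi$ (an SL state formula), by the semantics $\mathcal I\models\phi$ iff $I_s^k,I_h^k\models_{_{SL}}\phi$, and $F(\phi,C)=f(\phi,C)$ by definition; Lemma~\ref{lemma: state formulas equisatisfiable} applied at the $k$-th state — whose stack/heap pair satisfies exactly the hypotheses $vh_n(c_k)=I_h^k$, $fv(\phi)\cap fv(C)=\emptyset$, $(I_s^k,I_h^k)\in(I_s[fv(\phi)],I_h[n])$ — gives $I_s^k,I_h^k\models_{_{SL}}\phi$ iff $I_s^k\cup[C\Leftarrow c_k],\emptyset\models_{_{SL}}\phi_s$, which is precisely $(\sigma',k,j)\models f(\phi,C)$ since a state formula's truth at a state is independent of $j$. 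The sub-cases $\varphi\in\{e_1=e_2,\,e_1\mapsto e_2,\,\phi_1\,\text{\scriptsize{\#}}\,\phi_2,\,\exists x:\phi\}$ are all covered by this, since $F$ agrees with $f$ on them.

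\textbf{Inductive step.} The cases $\neg P$, $P_1\vee P_2$ are immediate from the definition of $F$ and the induction hypothesis, noting that $F$ commutes with $\neg$ and $\vee$ and that negation/disjunction in the satisfaction relation are interpreted pointwise in $\mathcal I$. For $\bigcirc P$: $\mathcal I\models\bigcirc P$ iff $k<j$ and $(\sigma,k+1,j)\models P$; apply the (strengthened) hypothesis to the interpretation $(\sigma,k+1,j)$, using that restricting the index family $c_\sigma$ to positions $\ge k+1$ still satisfies all the side conditions, and that $F(\bigcirc P,C)=\bigcirc F(P,C)$. For $(P_1,\ldots,P_m)\,prj\,P_0$: unfold the $prj$ semantics to obtain the cut points $k=r_0\le\cdots\le r_m\preceq j$ and the spliced/projected interval $\sigma''$; apply the hypothesis to each $(\sigma,r_{l-1},r_l)$ and to $(\sigma'',0,|\sigma''|)$. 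The point needing attention is that the projected interval built from $\sigma'$ is exactly the one built from $\sigma$ with the same pointwise state substitution applied — this holds because $\sigma\downarrow(r_0,\ldots,r_m)$ and concatenation with $\sigma_{(r_m+1..j)}$ both act position-wise on states, so the substitution $(I_s^i,I_h^i)\mapsto(I_s^i\cup[C\Leftarrow c_i],\emptyset)$ distributes over these operations; one reindexes the value-vector family accordingly. The ${}^{*}$ case is analogous, handling the finite and the $\lim r_i=\omega$ alternatives separately, in each case reindexing the chunk partition and invoking the hypothesis on every chunk.

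\textbf{Main obstacle.} The genuine difficulty is bookkeeping in the $prj$ (and ${}^{*}$) case: one must verify that after projecting/splicing, the resulting interval $\sigma''$ over $\sigma$ and its counterpart over $\sigma'$ are related by exactly the same ``fixed $C$, position-dependent $c$'' substitution scheme required by the statement, so that the induction hypothesis is even applicable — i.e.\ that the hypotheses $vh_n(c_i)=I_h^i$, $|C|=|c_i|=n$ and $(I_s^i,I_h^i)\in(I_s[fv(P)],I_h[n])$ are inherited by every state of $\sigma''$. This is where the choice to use a single vector $C$ across all states (rather than per-state vectors) pays off and must be used carefully; once this invariant is nailed down, each connective is routine. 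I would also remark that $fv(P_l)\subseteq fv(P)$ for every subformula $P_l$, so the disjointness condition $fv(\cdot)\cap fv(C)=\emptyset$ and the bounded-stack condition persist down the induction without extra argument.
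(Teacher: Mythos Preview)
Your proposal is correct and follows the same approach as the paper: structural induction on $P$ with the base case for state formulas discharged by Lemma~\ref{lemma: state formulas equisatisfiable}. The paper's own proof only spells out the base case and dismisses the remaining connectives as ``straightforward''; your explicit strengthening of the induction hypothesis to arbitrary interpretations $(\sigma,k,j)$ and the invariant-preservation argument for $prj$ and ${}^{*}$ are exactly the details the paper leaves implicit.
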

\begin{proof}
    The proof is based on a structural induction over $P$.

    \textbf{Case}: $P \equiv e_{1}=e_{2}$, $e_{1} \mapsto e_{2}$, $\phi_{1} \text{\scriptsize{\#}} \phi_{2}$ or $\exists x : \phi$

    $\Rightarrow:$ Suppose $(\sigma, 0, |\sigma|) \models P$. Since $P$ is a state formula, then $(I_{s}^{0}, I_{h}^{0}) \models_{_{SL}} P$. Since $(I_{s}^{0}, I_{h}^{0}) \in (I_{s}[fv(P)], I_{h}[n])$, $vh_{n}(c_{0})=I_{h}$ and $fv(P) \cap fv(C) = \emptyset$, by Lemma \ref{lemma: state formulas equisatisfiable}, $P$ is equisatisfiable to $f(P, C)$, i.e., $(I_{s}^{0} \cup [C \Leftarrow c_{0}], \emptyset) \models_{_{SL}} f(P,C)$. Hence, we have $(\sigma', 0, |\sigma'|) \models f(P,C)$. Furthermore, $F(P,C) = f(P, C)$ according to the definition of $F$. Thus, $(\sigma', 0, |\sigma'|) \models F(P,C)$.

    $\Leftarrow:$ Suppose $(\sigma', 0, |\sigma'|) \models F(P,C)$. By the definition of $F$, we have $F(P,C) = f(P,C)$. Moreover, since $f(P,C)$ is a state formula, $(I_{s}^{0} \cup [C \Leftarrow c_{0}], \emptyset) \models_{_{SL}} f(P,C)$. Since $(I_{s}^{0}, I_{h}^{0}) \in (I_{s}[fv(P)], I_{h}[n])$, $vh_{n}(c_{0})=I_{h}$ and $fv(P) \cap fv(C) = \emptyset$, by Lemma \ref{lemma: state formulas equisatisfiable}, $f(P, C)$ is equisatisfiable to $P$, i.e., $(I_{s}^{0}, I_{h}^{0}) \models_{_{SL}} P$. Therefore, $(\sigma, 0, |\sigma|) \models P$.

    Other cases are straightforward to be proved. \qed
\end{proof}

Recall that the size of $C$ corresponding to a heap size is required to be a bounded size when translating a state formula. But for a temporal formula $P$, there may be more than one state formula need to be considered at the same time. The max size should be selected. Concretely, the size of the vector $C$ for translating $P$ will be
\begin{eqnarray}
    |C|_{P} = max(\{\, |\phi| + |fv(\phi)| \;\;\big|\;\; \phi \text{ occurs in } P \,\}) \nonumber
\end{eqnarray}

For instance, for the formula $\bigcirc x = 0 \vee \square x \mapsto 0$, there exist two state formulas $x \mapsto 0$ and $x = 0$ in it. The size of $C$ for translating $P$ should be $max(\{\, 2, 1 \,\}) = 2$, which is the larger size for translating the two state formulas.

\begin{myexample}
{\rm Given a PPTL$^{\tiny\mbox{SL}}$ formula $P \equiv \bigcirc x = 0 \vee \square x \mapsto 0$, we can find a RPPTL$^{\tiny\mbox{SL}}$ formula $P_{s}$ which preserves the satisfaction of $P$ under the conditions presented in Theorem \ref{theorem: PPTLSL formulas equisatisfiable}. We choose the variable vector as $C=((h_{1}, h_{1}'), (h_{2}, h_{2}'))$.
\begin{eqnarray}
    &&F (\bigcirc x = 0 \vee \square x \mapsto 0, C) \nonumber \\
    &=&F(\bigcirc x=0, C) \vee F (\square x \mapsto 0, C) \nonumber \\
    &=&\bigcirc F(x=0, C) \vee \square F (x \mapsto 0, C) \nonumber \\
    &=&\bigcirc f(x=0, ((h_{1}, h_{1}'), (h_{2}, h_{2}'))) \vee \square f(x \mapsto 0, ((h_{1}, h_{1}'), (h_{2}, h_{2}'))) \nonumber \\
    &=&\bigcirc x=0 \vee \square \big((h_{1} \neq 0 \wedge h_{2} = 0 \wedge h_{1} = x \wedge h_{1}' = 0) \nonumber \\
    &&\vee (h_{2} \neq 0 \wedge h_{1} = 0 \wedge h_{2} = x \wedge h_{2}' = 0) \big) \nonumber
\end{eqnarray}}
\end{myexample}

In fact the above results enable us to only concentrate on RPPTL$^{\tiny\mbox{SL}}$ instead of PPTL$^{\tiny\mbox{SL}}$. RPPTL$^{\tiny\mbox{SL}}$ does not contain heap formulas, so it gives a more compact view of PPTL$^{\tiny\mbox{SL}}$. In the sequel, we will establish an isomorphism relationship between RPPTL$^{\tiny\mbox{SL}}$ and PPTL in a natural way so as to reuse the theory of PPTL.

\subsection{Isomorphism Relationship}

Let $L_{P_{s}}$ denote the set of all RPPTL$^{\tiny\mbox{SL}}$ formulas and $L_{Q}$ the set of all PPTL formulas. The second key step in our theory is to introduce a one-to-one relationship between $L_{P_{s}}$ and $L_{Q}$ with respect to their syntax structures.

\begin{lemma}\label{lemma: equation formulas and propositions}
    {\rm There exists a bijective relationship between atomic equation formulas of $P_{s}$ and atomic propositions of $Q$.}
\end{lemma}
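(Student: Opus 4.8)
The plan is to reduce the claim to a cardinality fact and then, for later use, strengthen it to an \emph{explicit, computable} bijection that can drive the syntactic isomorphism between RPPTL$^{\tiny\mbox{SL}}$ and PPTL. Concretely, I would show that the set $\mathcal{E}$ of atomic equation formulas $e_{1}=e_{2}$ of $P_{s}$ and the set $Prop$ of atomic propositions of $Q$ are both countably infinite, and then pair them off along fixed enumerations.

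For $\mathcal{E}$: a term is either a natural number or a variable, so the set of terms is $\mathbb{N} \cup Var$; using the usual order on $\mathbb{N}$ and the fixed ordering on $Var$ postulated when $Var$ is introduced, I would interleave the two orderings to obtain an explicit enumeration $e^{(0)}, e^{(1)}, \ldots$ of all terms, which shows the set of terms is countably infinite. An atomic equation formula is determined by an ordered pair of terms, so $\mathcal{E}$ is in bijection with the set of such pairs; composing with a standard pairing function on the indices yields an explicit enumeration $\varepsilon_{0}, \varepsilon_{1}, \ldots$ of $\mathcal{E}$, so $\mathcal{E}$ is countably infinite. For $Prop$: it is countable by assumption, and I would take it to be countably infinite — this is the standard convention and is in any case forced if every RPPTL$^{\tiny\mbox{SL}}$ formula is to have an isomorphic PPTL counterpart — and fix an enumeration $q_{0}, q_{1}, \ldots$ of it. Then $\rho(\varepsilon_{i}) = q_{i}$ defines a bijection $\rho : \mathcal{E} \to Prop$; both $\rho$ and $\rho^{-1}$ are computable from the fixed enumerations, and $\rho$ is the map that the subsequent isomorphism construction will lift to whole formulas.

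The argument is routine; the only genuine point to nail down is that $Prop$ is infinite, since otherwise no bijection can exist ($\mathcal{E}$ being infinite), so I would record that $Prop$ is countably infinite as a standing assumption. A minor modelling choice is whether $e_{1}=e_{2}$ and $e_{2}=e_{1}$ are regarded as the same formula; either reading leaves $\mathcal{E}$ countably infinite and does not affect the construction of $\rho$, so it is enough to note it in passing. I do not expect any real obstacle here — the substance of the section lies in Theorem \ref{theorem: PPTLSL formulas equisatisfiable} and in the isomorphism lemma(s) that build on this bijection.
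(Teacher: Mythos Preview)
Your proposal is correct and close in spirit to the paper's proof, but the presentation differs. The paper does not argue via cardinality and a generic enumeration; instead it directly \emph{names} the atomic propositions by the indices of the equation's components. Concretely, with $Var=\{x_0,x_1,\ldots\}$ it takes $Prop$ to be the doubly indexed family $\{p_{i,j}\}\cup\{q_{i,j}\}$ and sets $g(i=x_j)=q_{i,j}$ and $g(x_i=x_j)=p_{i,j}$ for $i<j$, then asserts $g$ is bijective. Your route---enumerate terms, pair the indices, match against a fixed enumeration of $Prop$---is a touch more abstract but has the advantage of uniformly covering all syntactic shapes of $e_1=e_2$ (including the constant--constant and symmetric cases that the paper's $g$ leaves implicit), and it makes the ``$Prop$ must be infinite'' assumption explicit rather than buried in the choice of index set. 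The paper's version has the advantage that the bijection is structurally transparent: one can read off the equation from the proposition's subscripts without unwinding a pairing function, which is convenient when the map $g$ is later lifted to whole formulas in Definition~\ref{def: isomorphism of formulas} and Theorem~\ref{theorem: isomorphism formulas}. Either way the lemma is routine; your observation that the symmetry convention for $e_1=e_2$ versus $e_2=e_1$ does not affect the outcome is correct and worth the one-line remark you give it.
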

\begin{proof}
    Let $Var = \{\, x_{0}, x_{1}, x_{2}, \ldots \,\}$ be the countable infinite set of variables. Assume the countable infinite set of propositions is $Prop = \{\, p_{0,1}, \ldots, p_{i, j}, \ldots,$ $q_{0,0}, \ldots, q_{i', j'}, \ldots \,\}$, where $0 \leq i' \leq n, 1 \leq j, 0 \leq j', 0 \leq i,$ and $i < j$. The function $g$ is defined as
    \begin{eqnarray}
        g(i=x_{j}) \overset{\text{def}}{=} q_{i,j} \qquad g(x_{i}=x_{j}) \overset{\text{def}}{=} p_{i,j}, i < j \nonumber
    \end{eqnarray}
    Obviously, $g$ is a bijective. Hence the conclusion holds. \qed
\end{proof}

It remains to establish structural isomorphism between RPPTL$^{\tiny\mbox{SL}}$ and PPTL. The next result gives another important step towards the development of our techniques. Before doing that, we define the formula isomorphism at the syntax structure level.

\begin{mydefinition}[Isomorphism]\label{def: isomorphism of formulas}
    {\rm Given a RPPTL$^{\tiny\mbox{SL}}$ formula $P_{s}$ and a PPTL formula $Q$, $P_{s}$ is said isomorphic to $Q$ (written as $P_{s} \cong Q$) if and only if \\
    (1) $P_{s} \equiv e_{1}=e_{2}$, $Q \equiv q$, $g(e_{1}=e_{2})=q$ ($g$ is defined in Lemma \ref{lemma: equation formulas and propositions}), or \\
    (2) $P_{s} \equiv \neg P_{s_{1}}$, $Q \equiv \neg Q_{1}$, $P_{s_{1}} \cong Q_{1}$, or \\
    (3) $P_{s} \equiv P_{s_{1}} \vee P_{s_{2}}$, $Q \equiv Q_{1} \vee Q_{2}$, $P_{s_{1}} \cong Q_{1}$, $P_{s_{2}} \cong Q_{2}$, or \\
    (4) $P_{s} \equiv \bigcirc P_{s_{1}}$, $Q \equiv \bigcirc Q_{1}$, $P_{s_{1}} \cong Q_{1}$, or \\
    (5) $P_{s} \equiv (P_{s_{1}},\ldots,P_{s_{m}}) \, prj \, P_{s_{0}}$, $Q \equiv (Q_{1},\ldots,Q_{m}) \, prj \, Q_{0}$, $P_{s_{i}} \cong Q_{i}$ for all $i$, or \\
    (6) $P_{s}\equiv P_{s_{1}}^{*}$, $Q \equiv Q_{1}^{*}$, $P_{s_{1}} \cong Q_{1}$.} \qed
\end{mydefinition}

Theorem \ref{theorem: isomorphism formulas} explains that there actually exists a bijective relationship between $L_{P_{s}}$ and $L_{Q}$ from the syntax equivalent point of view. It leads us to reuse the theory of PPTL for RPPTL$^{\tiny\mbox{SL}}$, especially the logic laws, decision procedure and the related definitions.

\begin{theorem}\label{theorem: isomorphism formulas}
    {\rm For any RPPTL$^{\tiny\mbox{SL}}$ formula $P_{s}$, there exists a PPTL formula $Q$ such that $P_{s} \cong Q$, and vice versa.}
\end{theorem}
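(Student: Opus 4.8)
The plan is to prove both directions by structural induction, defining explicit translation functions that witness the isomorphism. For the forward direction (every RPPTL$^{\tiny\mbox{SL}}$ formula $P_{s}$ has an isomorphic PPTL formula $Q$), I would define a map $t : L_{P_{s}} \to L_{Q}$ recursively following the six clauses of Definition \ref{def: isomorphism of formulas}: set $t(e_{1}=e_{2}) = g(e_{1}=e_{2})$ using the bijection $g$ from Lemma \ref{lemma: equation formulas and propositions}, and then propagate through the connectives, $t(\neg P_{s_{1}}) = \neg t(P_{s_{1}})$, $t(P_{s_{1}} \vee P_{s_{2}}) = t(P_{s_{1}}) \vee t(P_{s_{2}})$, $t(\bigcirc P_{s_{1}}) = \bigcirc t(P_{s_{1}})$, $t((P_{s_{1}},\ldots,P_{s_{m}}) \, prj \, P_{s_{0}}) = (t(P_{s_{1}}),\ldots,t(P_{s_{m}})) \, prj \, t(P_{s_{0}})$, and $t(P_{s_{1}}^{*}) = t(P_{s_{1}})^{*}$. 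One checks by the grammar of PPTL that $t(P_{s})$ is always a well-formed PPTL formula, and then a trivial induction matching the recursion of $t$ against the clauses of $\cong$ shows $P_{s} \cong t(P_{s})$.

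For the converse direction (every PPTL formula $Q$ has an isomorphic RPPTL$^{\tiny\mbox{SL}}$ formula), I would define the inverse map $t' : L_{Q} \to L_{P_{s}}$ the same way, using $g^{-1}$ on atomic propositions: $t'(q) = g^{-1}(q)$, which is well-defined precisely because $g$ is a bijection, and again propagating homomorphically through $\neg$, $\vee$, $\bigcirc$, $prj$ and $^{*}$. Since the grammar of $Q$ and $P_{s}$ differ only in the atomic case, $t'(Q)$ is always a well-formed RPPTL$^{\tiny\mbox{SL}}$ formula, and an identical induction gives $t'(Q) \cong Q$. In fact $t$ and $t'$ are mutually inverse, so the relation $\cong$ is the graph of a bijection between $L_{P_{s}}$ and $L_{Q}$, which is the strong form of the statement and is what licenses reusing the PPTL decision procedure.

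Both inductions are essentially bookkeeping: the base case is exactly Lemma \ref{lemma: equation formulas and propositions}, and each inductive step is one line, invoking the corresponding clause of Definition \ref{def: isomorphism of formulas} together with the induction hypothesis. The only point requiring any care — and the closest thing to an obstacle — is the well-formedness check: one must verify that applying $t$ (resp. $t'$) clause-by-clause never leaves the target grammar, in particular that the projection clause preserves the constraint that all of $P_{1},\ldots,P_{m},P_{0}$ (resp. $Q_{1},\ldots,Q_{m},Q_{0}$) are well-formed, and that the arity $m$ is preserved. Since the two grammars are syntactically identical except for the atomic productions, and $g$ is a bijection between those atomic productions, this check is immediate, so the theorem follows.
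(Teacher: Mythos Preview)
Your proposal is correct and matches the paper's own proof essentially verbatim: the paper defines the forward map $G$ (your $t$) and the backward map $H$ (your $t'$) by exactly the same clause-by-clause recursion, using $g$ and $g^{-1}$ at the atomic level, and then appeals to Definition~\ref{def: isomorphism of formulas}. If anything, you supply slightly more detail than the paper (the explicit mention of structural induction, the well-formedness check, and the observation that $t$ and $t'$ are mutual inverses), but the argument is the same.
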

\begin{proof}
    Given a formula $P_{s}$, a mapping $G : L_{P_{s}} \longrightarrow L_{Q}$ is constructed as
    \begin{longtable}{rclrcl}
        $G(e_{1}=e_{2})$ &$\overset{\text{def}}{=}$& $g(e_{1}=e_{2})$ &\quad $G(\neg P_{s})$ &$\overset{\text{def}}{=}$& $\neg G(P_{s})$ \\
        $G(\bigcirc P_{s})$ &$\overset{\text{def}}{=}$& $\bigcirc G(P_{s})$ &\quad $G(P_{s_{1}} \vee P_{s_{2}})$ &$\overset{\text{def}}{=}$& $G(P_{s_{1}}) \vee G(P_{s_{2}})$ \\
        \multicolumn{6}{l}{$G((P_{s_{1}},\ldots,P_{s_{m}}) \; prj \; P_{s_{0}}) \overset{\text{def}}{=} (G(P_{s_{1}}),\ldots,G(P_{s_{m}})) \; prj \; G(P_{s_{0}})$} \\
        $G(P_{s}^{*})$ &$\overset{\text{def}}{=}$& $G(P_{s})^{*}$ &&&
    \end{longtable}
    As expected, a formula $Q$ can be found such that $P_{s} \cong Q$ by Definition \ref{def: isomorphism of formulas}.

    Given a formula $Q$, a mapping $H : L_{Q} \longrightarrow L_{P_{s}}$ is constructed as
    \begin{longtable}{rclrcl}
        $H(q)$ &$\overset{\text{def}}{=}$& $g^{-1}(q)$ &\quad $H(\neg Q)$ &$\overset{\text{def}}{=}$& $\neg H(Q)$ \\
        $H(\bigcirc Q)$ &$\overset{\text{def}}{=}$& $\bigcirc H(Q)$ &\quad $H(Q_{1} \vee Q_{2})$ &$\overset{\text{def}}{=}$& $H(Q_{1}) \vee H(Q_{2})$ \\
        \multicolumn{6}{l}{$H((Q_{1},\ldots,Q_{m}) \; prj \; Q_{0}) \overset{\text{def}}{=} (H(Q_{1}),\ldots,H(Q_{m})) \; prj \; H(Q_{0})$} \\
        $H(Q^{*})$ &$\overset{\text{def}}{=}$& $H(Q)^{*}$ &&&
    \end{longtable}
    Hence a formula $P_{s}$ can be found such that $Q \cong P_{s}$ by Definition \ref{def: isomorphism of formulas}. \qed
\end{proof}

\section{Decision Procedure for PPTL$^{\tiny\mbox{SL}}$}\label{sec: Normal Form and NFG}

In the previous section, we prove an isomorphic relationship between PPTL$^{\tiny\mbox{SL}}$ and PPTL so as to reuse the theory of PPTL. We will sketch a decision procedure for the purpose of checking the satisfiability of PPTL$^{\tiny\mbox{SL}}$ formulas in this section. Due to space constraints, we do not present complete definitions and algorithms in the rest part of this section, they can be found in the papers \cite{A Decision Procedure for Propositional Projection Temporal Logic with Infinite Models,Complexity of Propositional Projection Temporal Logic with Star} with slight changes.

The decision procedure for checking the satisfiability of PPTL formulas relies heavily on a specific formula form called Normal Form. Informally, the normal form of a formula divides the formula into two rather intuitive parts: the present component and the future component, the former means the current interval ending point has been reached while the latter has the opposite meaning. Similarly, we can define normal form for RPPTL$^{\tiny\mbox{SL}}$ formulas since RPPTL$^{\tiny\mbox{SL}}$ is isomorphic to PPTL.
\begin{equation}\label{def: equation normal form}
    P_{s} \equiv \overset{n'}{\underset{j=1}{\bigvee}} (P_{e_{j}} \wedge \varepsilon) \vee \overset{n}{\underset{i=1}{\bigvee}} (P_{c_{i}} \wedge \bigcirc P_{i}') \nonumber
\end{equation}
where $P_{e_{j}}$ and $P_{c_{i}}$ are conjunctions composed of atomic equation formulas or their negations, and $P_{i}'$ is a general RPPTL$^{\tiny\mbox{SL}}$ formula.


Using a very similar proof of Duan et al. \cite{A Decision Procedure for Propositional Projection Temporal Logic with Infinite Models,Complexity of Propositional Projection Temporal Logic with Star}, one can derive that any RPPTL$^{\tiny\mbox{SL}}$ formula is able to be written to its normal form since the logic laws can be inherited from PPTL.

We now give Algorithm \ref{NF algorithm} for transforming a PPTL$^{\tiny\mbox{SL}}$ formula to a normal form of its equisatisfiable RPPTL$^{\tiny\mbox{SL}}$ formula. The most important difference from the algorithm of Duan et al. lies in treating state formulas by using the translation formalized in the previous section. Other treatment on temporal connectives remain the same. In particular, the sub-algorithm CONF is used to transform a normal form into its complete normal form, while algorithm NEG is used to negate a complete normal form obtained from algorithm CONF. Algorithms PRJ and CHOP, respectively, are used to transform the formulas in projection and chop constructs to their normal forms. These algorithms are analogous to those given in \cite{A Decision Procedure for Propositional Projection Temporal Logic with Infinite Models} and \cite{Complexity of Propositional Projection Temporal Logic with Star}. Algorithm DNF equivalently rewrites a formula to its disjunction normal form.

\begin{algorithm} 
\small
\caption{\small{Algorithm for translating a PPTL$^{\tiny\mbox{SL}}$ formula to a normal form of its equisatisfiable RPPTL$^{\tiny\mbox{SL}}$ formula}} 
\label{NF algorithm} 
\textbf{Function} NF($F(P,C)$)
\begin{algorithmic}[1] 
\STATE \textbf{begin function}
\STATE \quad \textbf{case}
\STATE \quad \quad $P$ is $e_{1}=e_{2}$ or $e_{1} \mapsto e_{2}$ or $\phi_{1} \text{\scriptsize{\#}} \phi_{2}$ or $\exists x : \phi$: \textbf{return} $\text{DNF(}F(P,C)) \wedge \varepsilon \vee \text{DNF(}F(P,C)) \wedge \bigcirc true$;
\STATE \quad \quad $P$ is $P_{1} \vee P_{2}$: \textbf{return} $\text{NF(}F(P_{1},C)) \vee \text{NF(}F(P_{2},C))$;
\STATE \quad \quad $P$ is $\neg P_{1}$: \textbf{return} $\text{NEG(CONF(NF($F(P_{1},C$))))}$;
\STATE \quad \quad $P$ is $\bigcirc P_{1}$: \textbf{return} $F(P,C)$;
\STATE \quad \quad $P$ is $P_{1} ; P_{2}$: \textbf{return} $\text{CHOP}(F(P,C))$;
\STATE \quad \quad $P$ is $(P_{1},\ldots,P_{m}) \text{ $prj$ } P_{0}$: \textbf{return} $\text{PRJ}(F(P,C))$;
\STATE \quad \quad $P$ is $P_{1}^{*}$: \textbf{return} $\varepsilon \vee \text{CHOP}(F(P_{1},C) ; F(P_{1},C)^{*})$;
\STATE \quad \textbf{end case}
\STATE \textbf{end function}
\end{algorithmic}
\end{algorithm}

Analogous to PPTL, RPPTL$^{\tiny\mbox{SL}}$ has its normal form which is useful for constructing a graph structure that explicitly characterizes the models of the corresponding formula. The graph structure, called Normal Form Graph (NFG), is constructed according to the normal form. For a RPPTL$^{\tiny\mbox{SL}}$ formula $P_{s}$, the NFG of $P_{s}$ is a directed graph, $G=(CL(P_{s}),EL(P_{s}))$, where $CL(P_{s})$ denotes the set of nodes and $EL(P_{s})$ denotes the set of edges in the graph. In $CL(P_{s})$, each node is specified by a formula in PPTL$^{\tiny\mbox{SL}}$, while in $EL(P_{s})$, each edge is a directed arc labeled with a state formula $P_{e}$ from node $P_{s}$ to node $P_{s}'$ and identified by a triple, $(P_{s}, P_{e}, P_{s}')$. In short, the NFG of $P_{s}$ can be built by a recursive approach.

As an example, consider the PPTL$^{\tiny\mbox{SL}}$ formula $P \equiv \bigcirc x=0 \vee \square x \mapsto 0$. We first translate $P$ to its equisatisfable formula $F(P, C)$ with $F$ and $C$, then the NFG of $F(P, C)$ can be constructed as shown in Fig.\ref{NFG}. The edges are labeled in red and the nodes in black.

\begin{figure}[htb]
\centering
\includegraphics[scale=0.73]{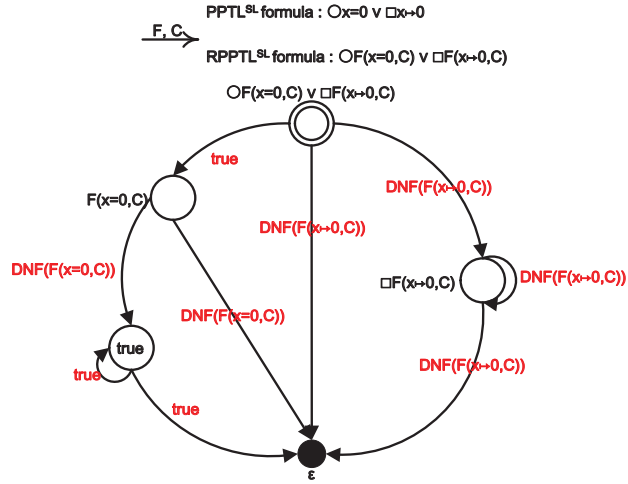}
\caption{\label{NFG} Example NFG}
\end{figure}

The edges labeled by state formulas which is unsatisfiable should be removed from an NFG. A finite path from the root node to the $\varepsilon$ node in the NFG of the formula corresponds to a finite model of the formula while an infinite path emanating from the root corresponds to an infinite model of the formula. There exists several finite or infinite path in Fig.\ref{NFG}. For instance, $\bigcirc F(x=0, C) \vee \square F(x \mapsto 0, C), \text{DNF}(F(x \mapsto 0, C)), \varepsilon$ is a finite path, and $\bigcirc F(x=0, C) \vee \square F(x \mapsto 0, C), true, F(x=0, C), \text{DNF}(F(x = 0, C)), true, \ldots$ is an infinite path. Therefore, the formula is satisfiable. Based on NFG, a decision procedure for checking satisfiability of PPTL$^{\tiny\mbox{SL}}$ formulas can be obtained similar to the one presented in \cite{A Decision Procedure for Propositional Projection Temporal Logic with Infinite Models,Complexity of Propositional Projection Temporal Logic with Star} for PPTL formulas.

\section{Conclusion}\label{sec: conclusion}

This paper integrates a decidable fragment of Separation Logic (SL) with Propositional Projection Temporal Logic (PPTL) to obtain a two-dimensional (spatial and temporal) logic PPTL$^{\tiny\mbox{SL}}$. The state formulas of PPTL$^{\tiny\mbox{SL}}$ are SL assertions, on top of which are the outer temporal connectives taken from PPTL. It is obvious to see that the two-dimensional logic marries the advantages of both, and it has the ability to relate consecutive configurations of the heap. In a word, it enables us to verify temporal properties of heaps.

Furthermore, in a general sense, another important contribution is that we also prove an isomorphism relationship between PPTL and PPTL$^{\tiny\mbox{SL}}$ formulas. This leads us to reuse PPTL theory to solve the satisfiability problem of PPTL$^{\tiny\mbox{SL}}$. In the future, a model checking approach by using PPTL$^{\tiny\mbox{SL}}$ as the specification language will be studied. We will possibly explore the unified model checking approach \cite{A Unified Model Checking Approach with Projection Temporal Logic} with PPTL$^{\tiny\mbox{SL}}$ as the specification language soon. The program is modeled by MSVL (Modeling Simulation and Verification Language) \cite{Framed Temporal Logic Programming} which is an executable logic programming language. In addition, to examine the entire approach, several big case studies will also be carried out.


\end{document}